\documentclass[11pt]{amsart}
\usepackage[margin=1.1in]{geometry}
\usepackage{amsmath,amssymb,amsthm}
\usepackage{booktabs}
\usepackage{xcolor}
\usepackage{hyperref}
\hypersetup{colorlinks=true,linkcolor=blue!60!black,citecolor=blue!60!black,urlcolor=blue!60!black}

\newtheorem{theorem}{Theorem}[section]
\newtheorem{lemma}[theorem]{Lemma}

\theoremstyle{definition}
\newtheorem{definition}[theorem]{Definition}
\theoremstyle{remark}
\newtheorem{remark}[theorem]{Remark}
\newtheorem{question}[theorem]{Question}

\newcommand{\Jac}{\operatorname{Jac}}
\newcommand{\dimm}{\beta}

\newcommand{\R}{\mathbb{R}}

\newcommand{\Var}{\operatorname{Var}}
\newcommand{\Ent}{H}
\DeclareMathOperator{\M}{M}

\title[The metric dimension of Jaccard space]{The exact asymptotic constant in the metric dimension of Jaccard space}
\author{Bj{\o}rn Kjos-Hanssen}
\address{Department of Mathematics, University of Hawai`i at M\=anoa}
\email{bjoernkh@hawaii.edu}
\date{\today}

\begin{document}

\begin{abstract}
Let $X$ be a finite set with $|X|=n$ and let $\Jac(a,b)=|a\,\triangle\, b|/|a\cup b|$ be the Jaccard distance on the power set $2^X$.
Lladser and Paradise recently proved that the metric dimension of $(2^X,\Jac)$ is $\Theta(n/\ln n)$, with the constant left open; their bounds are $(\ln 2)\,n/\ln n\lesssim \dimm(2^X,\Jac)\lesssim 2\ln(2e)\,n/\ln n$.
We determine the constant:
\[
\dimm(2^X,\Jac)=\frac{2n}{\log_2 n}\,(1+o(1))=(2\ln 2)\,\frac{n}{\ln n}\,(1+o(1)).
\]
The proof identifies the problem, on each ``slice'' of subsets of fixed cardinality, with the Erd\H{o}s--R\'enyi coin-weighing problem for a spring scale (the problem of \emph{detecting matrices}). The lower bound is the Erd\H{o}s--R\'enyi entropy argument applied to the middle slice; the upper bound follows from the explicit detecting families of Lindstr\"om and of Cantor and Mills, augmented by a single extra landmark that reveals cardinality.
\end{abstract}

\maketitle

\section{Introduction}

Throughout, $X$ is a finite set, $n=|X|$, and $2^X$ is its power set. The \emph{Jaccard distance} between $a,b\subseteq X$ is
\[
\Jac(a,b)=\frac{|a\,\triangle\, b|}{|a\cup b|}\quad(a\neq b),\qquad \Jac(a,a)=0,
\]
where $\triangle$ is symmetric difference; in particular $\Jac(a,b)=1$ whenever $a\cap b=\emptyset$ and $a\neq b$. It is classical that $\Jac$ is a metric on $2^X$ (see e.g.\ \cite{Levandowsky1971,Lipkus1999,KjosHanssen2022}).

Recall the notion of metric dimension, introduced independently by Slater \cite{Slater1975} and by Harary and Melter \cite{HararyMelter1976} for graphs, and which makes sense in any metric space $(M,d)$: a set $R\subseteq M$ \emph{resolves} $M$ if the map
\[
a\longmapsto \big(d(a,r)\big)_{r\in R}\in\R^{R}
\]
is injective on $M$, and the \emph{metric dimension} $\dimm(M,d)$ is the least cardinality of a resolving set. Elements of $R$ are often called \emph{landmarks}.

Lladser and Paradise \cite{LladserParadise2024} studied the metric dimension of Jaccard space and proved:

\begin{theorem}[Lladser--Paradise \cite{LladserParadise2024}]\label{thm:LP}
As $n=|X|\to\infty$,
\begin{equation}\label{eq:LP}
(\ln 2)\,\frac{n}{\ln (n/2)}\,(1+o(1))\ \le\ \dimm(2^X,\Jac)\ \le\ 2\ln(2e)\,\frac{n}{\ln (n/2)}\,(1+o(1)).
\end{equation}
In particular $\dimm(2^X,\Jac)=\Theta(n/\ln n)$.
\end{theorem}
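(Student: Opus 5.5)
The plan is to prove the two inequalities in \eqref{eq:LP} separately, both resting on one observation. For a landmark $r$ and a set $a$ with $|a|=k$, we have $|a\triangle r|=k+|r|-2|a\cap r|$ and $|a\cup r|=k+|r|-|a\cap r|$. So once $k$ is known, $\Jac(a,r)$ is a strictly monotone function of $|a\cap r|$, namely
\[
\Jac(a,r)=1-\frac{|a\cap r|}{k+|r|-|a\cap r|}.
\]
Conversely, $|a\cap r|$ can be recovered from $\Jac(a,r)$, $k$ and $|r|$. Hence on the slice $\binom{X}{k}$, a family $R$ resolves exactly when the linear map $a\mapsto(|a\cap r|)_{r\in R}$ is injective on $\binom{X}{k}$.

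\textbf{Lower bound.} Fix the middle slice $k=\lfloor n/2\rfloor$ and suppose $R$ resolves $2^X$, with $m=|R|$. Each coordinate $|a\cap r|$ takes at most $n+1$ values, so injectivity gives the crude bound $(n+1)^m\ge\binom{n}{k}$, that is, $m\ge n/\log_2 n\,(1+o(1))$. To obtain the stated constant $\ln 2$ I would refine this count. Restrict the landmarks, or count only those coordinates whose values vary. Then note that $|r|$ may be large while $|a\cap r|$ ranges over $\{0,\dots,|r|\}$, so each landmark contributes at most $\log_2(n+1)$ bits, with $\log_2\binom{n}{n/2}\sim n$. This yields $m\ge n/\log_2(n+1)=(\ln 2)\,n/\ln n\,(1+o(1))$, and $\ln(n/2)\sim\ln n$ absorbs the discrepancy in the statement.

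\textbf{Upper bound.} I would build a random family. Take one extra landmark $r_0=X$; then $\Jac(a,X)=1-|a|/n$ reveals $k=|a|$. Add $m$ independent uniformly random subsets $r_1,\dots,r_m$. By the observation above, it suffices that the vectors $(|a\cap r_i|)_i$ separate distinct sets $a\ne b$ of equal size. For such a pair, $|a\cap r_i|-|b\cap r_i|$ is a difference of two independent Binomial$(d,1/2)$ counts, where $d=|a\setminus b|$. Its probability of vanishing is $\binom{2d}{d}4^{-d}\approx(\pi d)^{-1/2}$. A union bound over pairs grouped by $d$, using about $4^{d}\binom{n}{2d}\binom{n-2d}{\cdot}$ pairs, requires $m$ with $(\pi d)^{-m/2}\cdot(\text{number of pairs at distance }d)<1$ for every $d$. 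The binding case is $d\approx n/2$, which gives $m\approx 2n\log 2/\log(n/2)$; tracking constants more carefully gives $2\ln(2e)$. Small $d$ is harmless, since there the pair count is only $n^{O(d)}$ while the per-landmark probability is at most $1/2$ once $m\gg d\log n$. This is exactly the regime that needs the second-moment splitting.

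The main obstacle is the union bound in the intermediate range of $d$, where neither the crude $1/2$ bound nor the $(\pi d)^{-1/2}$ bound alone suffices. There I would split at $d=n/\log^2 n$ and verify that both regimes are dominated by the $d\sim n/2$ term. Everything else is routine Stirling estimation.
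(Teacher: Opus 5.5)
Your reduction is right: the landmark $X$ reads off $|a|$, so only equal-size pairs need separating, and such a pair survives one random landmark with probability $\binom{2d}{d}4^{-d}$. The lower bound is also correct. The gap is in how you count for the union bound in the upper bound.

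You count pairs $\{a,b\}$ at distance $d$ as about $4^d\binom{n}{2d}\binom{n-2d}{\cdot}$, which includes the roughly $2^{n-2d}$ choices of the common part $a\cap b$. Yet for small $d$ you say the count is only $n^{O(d)}$. These two counts are incompatible, and with your own count the small-$d$ terms make the union bound fail outright.

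For $d=1$ there are $\binom n2 2^{n-2}$ pairs $C\cup\{x\},\,C\cup\{y\}$, each unseparated with probability $2^{-m}$, so you would need $m\gtrsim n$. More generally, with $m=c\,n/\ln n$ the terms diverge for all $d$ up to a positive power of $n$. No split of the $d$-range at $n/\log^2 n$, and no second-moment argument, repairs this.

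The missing observation is that the bad event $\{|a\cap r_i|=|b\cap r_i|\ \text{for all } i\}$ depends only on $(P,Q)=(a\setminus b,\,b\setminus a)$. So the union bound should run over unordered pairs of disjoint $d$-sets, of which there are at most $\binom nd^2\le(en/d)^{2d}$. This is exactly where the $2^{n-2d}$ factor disappears.

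With that count, the failure probability is at most $\sum_{1\le d\le n/2}(en/d)^{2d}(\pi d)^{-m/2}$, using $\binom{2d}{d}4^{-d}\le(\pi d)^{-1/2}$ for every $d\ge 1$. The factor $(en/d)^{2d}$ increases in $d$, and the term $d=n/2$ requires exactly $m>2\ln(2e)\,n/\ln(\pi n/2)$. A routine check shows that for $m=(1+\varepsilon)\,2\ln(2e)\,n/\ln(n/2)$ every term is $e^{-\Omega(n/\ln n)}$. So there is no real intermediate-range obstacle.

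Your identification of the binding term is also off. With the exact count $\tfrac12\binom{n}{2d}\binom{2d}{d}$, the $d=n/2$ term only asks for $m\gtrsim 2n\ln 2/\ln n$, which would be the optimal constant. But the counts near $d=n/3$ are of order $3^n$ up to polynomial factors. Those terms force $m\gtrsim (2\ln 3)\,n/\ln n$, the Erd\H{o}s--R\'enyi random-coding constant. The value $2\ln(2e)$ comes from the \emph{cruder} bound $(en/d)^{2d}$, not from tracking constants more carefully; either constant suffices for the statement.

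In the lower bound the ``refinement'' is unnecessary. Since $n/\log_2 n=(\ln 2)\,n/\ln n$, your crude count already gives the constant $\ln 2$. Using $|a\cap r|\le\lfloor n/2\rfloor$ on the middle slice gives the $\ln(n/2)$ form verbatim.

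For comparison, the paper does not reprove this theorem. It follows from the paper's sharper Theorems~\ref{thm:lower} and~\ref{thm:upper}, which use an entropy lower bound and explicit Lindstr\"om/Cantor--Mills detecting families in place of random landmarks.
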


The lower bound in \cite{LladserParadise2024} is a pigeonhole argument on the subsets of size $\lfloor n/2\rfloor$; the upper bound is probabilistic, using $k\ge 2\ln(2e)\,n/\ln(n/2)$ i.i.d.\ uniformly random subsets of $X$ together with three deterministic landmarks $\emptyset$, $\{x\}$, $X\setminus\{x\}$. The two constants in \eqref{eq:LP} differ by a factor $2\ln(2e)/\ln 2\approx 4.9$, and the exact constant was left open.

The purpose of this note is to close this gap.

\begin{theorem}\label{thm:main}
As $n=|X|\to\infty$,
\[
\dimm(2^X,\Jac)=\frac{2n}{\log_2 n}\,(1+o(1))=(2\ln 2)\,\frac{n}{\ln n}\,(1+o(1)).
\]
More precisely, for every $n\ge 2$,
\begin{equation}\label{eq:explicit-lower}
\dimm(2^X,\Jac)\ \ge\ \frac{2\,\big(n-\log_2(n+1)\big)}{\log_2\!\big(\tfrac{\pi e}{4}\,n+\tfrac{\pi e}{6}\big)},
\end{equation}
and
\begin{equation}\label{eq:upper}
\dimm(2^X,\Jac)\ \le\ \M(n)+1,
\end{equation}
where $\M(n)$ is the coin-weighing number defined in Section~\ref{sec:coin} below, which satisfies $\M(n)=\frac{2n}{\log_2 n}(1+o(1))$ by the theorems of Lindstr\"om and of Cantor and Mills.
\end{theorem}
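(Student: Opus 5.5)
The plan rests on one observation: on a slice of fixed cardinality, Jaccard distance to a landmark carries exactly the same information as the size of an intersection. Suppose $|a|=k$ and $r$ is a landmark with $|r|=s$, and write $t=|a\cap r|$. Then $|a\triangle r|=k+s-2t$ and $|a\cup r|=k+s-t$, so
\[
\Jac(a,r)=\frac{k+s-2t}{k+s-t}=1-\frac{t}{k+s-t}.
\]
For fixed $k$ and $s$ the right-hand side is strictly decreasing in $t$, because $t\mapsto t/(k+s-t)$ is increasing. So once $k$ is known, $\Jac(a,r)$ and $|a\cap r|$ determine each other. This reduces the whole question to coin weighing: find $\M(n)$ subsets $r_1,\dots,r_m$ such that $a\mapsto(|a\cap r_i|)_i$ is injective on $2^X$, i.e.\ a detecting matrix.

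\textbf{Upper bound.} Take a detecting family $r_1,\dots,r_{\M(n)}$ and add the single landmark $\emptyset$. Since $\Jac(a,\emptyset)=1$ for $a\neq\emptyset$ and $0$ for $a=\emptyset$, this landmark only separates $\emptyset$ from everything else; it does not reveal $|a|$. Something better is needed to read off $k=|a|$. I would try the landmark $X$ instead: $\Jac(a,X)=(n-|a|)/n$ determines $|a|$ directly. With $k$ known, the monotonicity above turns each $\Jac(a,r_i)$ into $|a\cap r_i|$, and detectability recovers $a$. This gives \eqref{eq:upper}. The asymptotic $\M(n)=\frac{2n}{\log_2 n}(1+o(1))$ is quoted from Lindstr\"om and from Cantor--Mills, with the Erd\H{o}s--R\'enyi lower bound supplying the matching half.

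\textbf{Lower bound.} Let $R$ be a resolving set of size $m$.
\begin{enumerate}
\item I would not restrict to the exact middle slice, since that costs roughly a $\tfrac12\log_2 n$ loss. The target numerator $n-\log_2(n+1)$ suggests a different route: pigeonhole over the $n+1$ cardinalities gives some $k$ with $\binom nk\ge 2^n/(n+1)$.
\item On that slice, the vector $(\Jac(a,r))_{r\in R}$ is an injective function of $(|a\cap r|)_{r\in R}$.
\item Take $a$ uniform on the slice. Then
\[
\log_2\binom nk=\Ent(\text{vector})\le\sum_{r\in R}\Ent\big(|a\cap r|\big).
\]
\item Each $|a\cap r|$ is hypergeometric, with variance at most about $n/4$, so its entropy is at most the discrete-Gaussian bound $\tfrac12\log_2\!\big(2\pi e(\Var+\tfrac1{12})\big)$.
\item Using $\Var\le \frac n4\cdot\frac{n}{n-1}$, or a comparable estimate, this yields an entropy bound of $\tfrac12\log_2\!\big(\tfrac{\pi e}{4}n+\tfrac{\pi e}{6}\big)$.
\item Combining the two displays gives \eqref{eq:explicit-lower}.
\end{enumerate}

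\textbf{Main obstacle.} The main difficulty is getting the exact constant in the entropy bound for hypergeometric variables. The plan is to use the Massey/Cover--Thomas inequality $H\le\frac12\log_2(2\pi e(\sigma^2+\frac1{12}))$ together with a uniform variance bound of the form $\sigma^2\le \frac{n+1}{4}$-ish. Adjusting that variance estimate to produce exactly $\frac{\pi e}4 n+\frac{\pi e}6$ is the step I expect to need care. The asymptotic statement then follows by comparing \eqref{eq:explicit-lower} with \eqref{eq:upper}, since both are $\frac{2n}{\log_2 n}(1+o(1))$.
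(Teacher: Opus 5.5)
There is a genuine gap, but it is local. Your architecture is the paper's: monotonicity of $\Jac$ in $|a\cap r|$ on a slice, the Erd\H{o}s--R\'enyi entropy argument with the discrete-Gaussian bound, and a detecting family plus the landmark $X$. The asymptotic statement and \eqref{eq:upper} go through as you describe, since any variance bound of order $n$ gives $\Ent(|a\cap r|)\le\frac12\log_2 n+O(1)$.

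\textbf{The gap.} It lies in the explicit inequality \eqref{eq:explicit-lower}, and step 5 as written is false. Plugging $\sigma^2\le\frac n4\cdot\frac{n}{n-1}$ into $\frac12\log_2\bigl(2\pi e(\sigma^2+\frac1{12})\bigr)$ gives $\frac12\log_2\bigl(\frac{\pi e}{2}\cdot\frac{n^2}{n-1}+\frac{\pi e}{6}\bigr)$. This exceeds the target $\frac12\log_2\bigl(\frac{\pi e}{4}n+\frac{\pi e}{6}\bigr)$ by at least half a bit per landmark, and the same happens with $\frac{n+1}{4}$. Since $2\pi e\bigl(\frac n8+\frac1{12}\bigr)=\frac{\pi e}{4}n+\frac{\pi e}{6}$, what you actually need is $\sigma^2\le n/8$.

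Your $n/4$ is the binomial bound for $n$ trials. But $|a\cap r|$ is hypergeometric with only $k$ draws and a finite-population correction, so
\[
\Var\bigl(|a\cap r|\bigr)=k\cdot\frac sn\Bigl(1-\frac sn\Bigr)\cdot\frac{n-k}{n-1}\le\frac14\cdot\frac{k(n-k)}{n-1}\le\frac{n^2}{16(n-1)}\le\frac n8\qquad(n\ge2),
\]
using $s(n-s)\le n^2/4$, $k(n-k)\le n^2/4$ and $n/(n-1)\le 2$. The missing step is to bound both quadratic factors, not just the one in $s$; this is exactly how the paper gets the constant.

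\textbf{A smaller point.} Your reason for avoiding the middle slice is mistaken. The slice your pigeonhole produces can be taken to be $\lfloor n/2\rfloor$, since that is where $\binom nk$ is largest. The paper works on exactly that slice with the same numerator $n-\log_2(n+1)$, so nothing is lost. Because the variance bound above holds for every $k$, either choice of slice yields \eqref{eq:explicit-lower} once the $n/8$ estimate is in place.
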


Thus the true constant is $2\ln 2\approx 1.386$ in the normalization $n/\ln n$, sitting between the constants $\ln 2\approx 0.693$ and $2\ln(2e)\approx 3.386$ of \cite{LladserParadise2024}: the lower bound of \cite{LladserParadise2024} is off by exactly a factor of $2$, and the upper bound by a factor of $(1+\ln 2)/\ln 2\approx 2.44$.

The key observation (Lemma~\ref{lem:slice}) is that on the set of all $a\subseteq X$ of a \emph{fixed} cardinality $m$, the Jaccard distance $\Jac(a,r)$ to a landmark $r$ is a strictly monotone function of the intersection size $|a\cap r|$. Consequently, resolving a cardinality slice by Jaccard distances is \emph{exactly} the same as determining a subset from the sizes of its intersections with the landmarks---which is the coin-weighing problem with a spring scale, studied by Erd\H{o}s and R\'enyi \cite{ErdosRenyi1963}, Lindstr\"om \cite{Lindstrom1964,Lindstrom1965}, and Cantor and Mills \cite{CantorMills1966} in the 1960s. The Jaccard metric additionally encodes the cardinality $|a|$, but a single extra landmark (namely $X$ itself) suffices to recover it, and, as the lower bound shows, knowing the cardinality is of no asymptotic help.

\section{Preliminaries}

\subsection{Resolving a cardinality slice}

For $0\le m\le n$ write $\binom{X}{m}=\{a\subseteq X: |a|=m\}$ for the $m$-th \emph{slice} of $2^X$.

\begin{lemma}\label{lem:slice}
Let $a,b\in\binom{X}{m}$ and $r\subseteq X$. Then
\[
\Jac(a,r)=\Jac(b,r)\iff |a\cap r|=|b\cap r|.
\]
Moreover, if $|a|=m$ is known, then $|a\cap r|$ is a function of $\Jac(a,r)$ and $|r|$:
\begin{equation}\label{eq:recover}
|a\cap r|=\frac{(m+|r|)\,(1-\Jac(a,r))}{2-\Jac(a,r)}\qquad(a\cup r\neq\emptyset).
\end{equation}
\end{lemma}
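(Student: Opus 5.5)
The plan is to express $\Jac(a,r)$ for $|a|=m$ as an explicit function of the single integer $t=|a\cap r|$ (with $m$ and $|r|$ fixed), and then show that this function is strictly decreasing in $t$ and invertible via \eqref{eq:recover}. Put $s=|r|$. Then $|a\cup r|=m+s-t$ and $|a\triangle r|=m+s-2t$. So whenever $a\cup r\neq\emptyset$,
\[
\Jac(a,r)=\frac{m+s-2t}{m+s-t}=1-\frac{t}{m+s-t}.
\]
This formula is also correct when $a=r\neq\emptyset$: there $t=m=s$ and the value is $0$, which matches the convention $\Jac(a,a)=0$. It is likewise correct when $a\cap r=\emptyset$ and $a\ne r$, where it gives $1$. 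So the one expression covers every case with $a\cup r\ne\emptyset$.

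Next, since $N=m+s$ is fixed, consider $f(t)=1-t/(N-t)$ on the range $0\le t\le\min(m,s)$. On this range $N-t\ge \max(m,s)>0$ when $N>0$. The map $t\mapsto t/(N-t)$ is strictly increasing on $[0,N)$, because its numerator increases and its positive denominator decreases. Hence $f$ is strictly decreasing and in particular injective. This gives the equivalence $\Jac(a,r)=\Jac(b,r)\iff |a\cap r|=|b\cap r|$ for $a,b\in\binom{X}{m}$.

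The degenerate case $a\cup r=\emptyset=b\cup r$ must be handled separately. Here $m=0$ and $r=\emptyset$, so $a=b=\emptyset$ and both sides of the equivalence hold trivially. If instead $m=0$ and $r\ne\emptyset$, the generic formula already applies.

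For \eqref{eq:recover}, write $J=\Jac(a,r)$ and solve $J=(N-2t)/(N-t)$ for $t$. This gives $J(N-t)=N-2t$, so $t(2-J)=N(1-J)$, and hence $t=N(1-J)/(2-J)$. The division is legitimate because $J\le 1<2$.

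No step is a real obstacle. The only point needing care is checking that the single closed form agrees with the piecewise definition of $\Jac$ (diagonal value $0$, and the empty union), which is exactly the reason the hypothesis $a\cup r\neq\emptyset$ appears in \eqref{eq:recover}.
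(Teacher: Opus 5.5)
Your proposal is correct and follows the paper's proof essentially step for step. You use the same closed form $(m+s-2t)/(m+s-t)$, the same treatment of the degenerate case $a\cup r=\emptyset$ (which forces $a=b=r=\emptyset$), and the same inversion justified by $J\le 1<2$; the only cosmetic difference is that you get strict monotonicity by rewriting the expression as $1-t/(N-t)$, whereas the paper computes the derivative $-(m+s)/(m+s-t)^2$.
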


\begin{proof}
Put $s=|r|$ and $t=|a\cap r|$. Since $|a\cup r|=m+s-t$ and $|a\triangle r|=m+s-2t$, we have for $a\cup r\neq\emptyset$
\[
\Jac(a,r)=f_{m,s}(t):=\frac{m+s-2t}{m+s-t}.
\]
(Note that this formula also gives the correct value $0$ when $a=r$, since then $t=m=s$.) For fixed $m,s$ with $m+s>0$,
\[
f_{m,s}'(t)=\frac{-2(m+s-t)+(m+s-2t)}{(m+s-t)^2}=\frac{-(m+s)}{(m+s-t)^2}<0,
\]
so $f_{m,s}$ is strictly decreasing on $[0,\min(m,s)]$, which proves the equivalence. If $a\cup r=\emptyset$ then $a=b=r=\emptyset$ and there is nothing to prove. Solving $J=(m+s-2t)/(m+s-t)$ for $t$ gives \eqref{eq:recover}; note $J\le 1<2$ so the denominator does not vanish.
\end{proof}

\subsection{Detecting families and the coin-weighing problem}\label{sec:coin}

\begin{definition}
A family $D=\{r_1,\dots,r_k\}$ of subsets of $X$ is a \emph{detecting family} (for $2^X$) if the map
\[
a\longmapsto\big(|a\cap r_1|,\dots,|a\cap r_k|\big)\in\{0,\dots,n\}^k
\]
is injective on $2^X$. It is a detecting family \emph{for a slice} $\binom{X}{m}$ if this map is injective on $\binom{X}{m}$. Let $\M(n)$ be the minimum size of a detecting family for $2^X$ with $|X|=n$.
\end{definition}

Identifying subsets with $0/1$-vectors, a detecting family is the same thing as a $k\times n$ matrix $A$ with entries in $\{0,1\}$ such that $Ax\ne Ay$ for all distinct $x,y\in\{0,1\}^n$; such matrices are called \emph{detecting matrices} \cite{Lindstrom1965,CantorMills1966}. The quantity $\M(n)$ is the number of weighings needed to determine which of $n$ coins are counterfeit (of known weight different from the genuine coins) using a spring scale, non-adaptively. Its asymptotics were determined in the 1960s:

\begin{theorem}[Erd\H{o}s--R\'enyi \cite{ErdosRenyi1963}; Lindstr\"om \cite{Lindstrom1964,Lindstrom1965}; Cantor--Mills \cite{CantorMills1966}]\label{thm:coin}
\[
\lim_{n\to\infty}\frac{\M(n)\log_2 n}{n}=2 .
\]
The lower bound $\liminf \M(n)\log_2 n/n\ge 2$ is due to Erd\H{o}s and R\'enyi \cite{ErdosRenyi1963}; the matching upper bound was obtained by Lindstr\"om \cite{Lindstrom1964,Lindstrom1965} and, independently, by Cantor and Mills \cite{CantorMills1966}, both via explicit constructions.
\end{theorem}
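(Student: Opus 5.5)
Both directions are classical and I would simply cite them, but the plan of proof is as follows. For the \emph{lower bound} I would use the Erd\H{o}s--R\'enyi entropy argument. Let $D=\{r_1,\dots,r_k\}$ be a detecting family and let $x$ be a uniformly random subset of $X$, so $\Ent(x)=n$ bits. Since $x\mapsto(|x\cap r_i|)_i$ is injective,
\[
n=\Ent(x)=\Ent\big(|x\cap r_1|,\dots,|x\cap r_k|\big)\le\sum_{i=1}^k \Ent(|x\cap r_i|).
\]
Each $|x\cap r_i|$ is $\mathrm{Binomial}(|r_i|,1/2)$. The entropy of a discrete variable with variance $\sigma^2$ is at most $\tfrac12\log_2\!\big(2\pi e(\sigma^2+\tfrac1{12})\big)$, and here $\sigma^2=|r_i|/4\le n/4$. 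This gives $\Ent(|x\cap r_i|)\le\tfrac12\log_2\!\big(\tfrac{\pi e}{2}n+\tfrac{\pi e}{6}\big)$, and hence
\[
\M(n)\ge \frac{2n}{\log_2(\tfrac{\pi e}{2}n+O(1))}=\frac{2n}{\log_2 n}(1+o(1)).
\]
This step is routine once the discrete maximum-entropy bound is in hand.

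For the \emph{upper bound} a random matrix is not good enough: a union bound over pairs $x\ne y$ only yields the weaker Erd\H{o}s--R\'enyi constant $\log_2 9$. So I would follow the explicit recursive construction of Cantor and Mills. The plan is to build, for each $j$, a $0/1$ detecting matrix $A_j$ with $2^j-1$ rows and $c_j$ columns, where $c_{j}=2c_{j-1}+2^{j-1}-1$, so that $c_j\sim j2^{j-1}$. Each step doubles the previous matrix side by side and bottom-stacks $\pm$-style combinations, in the Hadamard pattern but realized with $0/1$ entries via sums and differences of the row sets. The extra $2^{j-1}-1$ columns are decoded from parities. Decoding is inductive: the differences of paired new rows give the weighings of the two copies of $A_{j-1}$ on separate coordinate blocks, up to the contribution of the new columns. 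The new columns are recovered bit by bit from parity information, since each weighing is an integer and its residue mod $2$ isolates the extra coordinates.

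Given $A_j$, I would set $k=2^j-1$ with $j\approx\log_2 n-\log_2\log_2 n+1$. Then $c_j\ge n$, and truncating columns preserves detection. This gives $\M(n)\le 2^j-1$. For general $n$, interpolating between consecutive $j$ costs at most a factor $2$, so I would refine by taking direct sums of several $A_j$'s, which keeps the ratio $\M(n)\log_2 n/n\to 2$. Together the two bounds give $\lim \M(n)\log_2 n/n=2$.

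I expect the main obstacle to be verifying that the recursive construction actually detects, namely the inductive decoding step that separates the new parity columns from the two embedded copies of $A_{j-1}$. If the Cantor--Mills recursion proves awkward, I would try Lindstr\"om's construction instead, with rows indexed by subsets $S$ of a $j$-set and columns weighted via M\"obius inversion over the Boolean lattice. Its decodability reduces to the invertibility of a unitriangular incidence matrix together with parity arguments.
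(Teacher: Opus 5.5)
Your plan is correct. The paper only cites this theorem, but the ingredients it does spell out differ from yours in both halves.

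For the lower bound you run the Erd\H{o}s--R\'enyi entropy argument on the whole cube, with binomial coordinates of variance at most $n/4$. The paper (Theorem~\ref{thm:lower}) runs it on the middle slice, with hypergeometric coordinates of variance at most $n/8$ and numerator $n-\log_2(n+1)$. Yours is the natural argument for $\M(n)$. The paper needs the slice version because, by Lemma~\ref{lem:slice}, a Jaccard resolving set is only guaranteed to be detecting on each slice. The slice version still gives $\liminf\M(n)\log_2 n/n\ge 2$, since a detecting family for $2^X$ is detecting for $\binom{X}{\lfloor n/2\rfloor}$.

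For the upper bound the paper's remark uses Lindstr\"om's construction, which gives for \emph{every} $m$ a detecting family of $m$ sets on $\sum_{j\le m}\nu(j)$ points. Monotonicity of $\M$ then covers all $n$ with no interpolation loss. The Cantor--Mills doubling you chose is easier to verify by hand but only produces $2^j-1$ rows, hence your extra direct-sum step.

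Two steps of your sketch need to be made precise.

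First, you never say what the one row beyond the doubling does ($2^j-1=2(2^{j-1}-1)+1$), and it is the crux. Starting from $A_1=(1)$, take
\[
A_j=\begin{pmatrix} A_{j-1} & A_{j-1} & I\\ A_{j-1} & \overline{A_{j-1}} & 0\\ 0 & \mathbf{1}^{\top} & 0\end{pmatrix},
\]
with $\overline{A}$ the $0/1$ complement and $I$ the identity of size $2^{j-1}-1$. Write $x=(u,v,w)$.
\begin{itemize}
\item The last row gives $|v|$.
\item The readings of the $i$-th rows of the first two blocks sum to $2(A_{j-1}u)_i+|v|+w_i$, whose parity gives $w_i$.
\item The same two readings differ by $2(A_{j-1}v)_i-|v|+w_i$.
\end{itemize}
So $A_{j-1}u$ and $A_{j-1}v$ are known and induction applies. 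Without $|v|$ the complement block does not act as a ``difference,'' and parity does not isolate $w_i$.

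Second, this recursion gives $c_j=(j-1)2^{j-1}+1$. At your choice $j\approx\log_2 n-\log_2\log_2 n+1$ this is still slightly below $n$, so the direct-sum step is not optional. It must also use many copies of a block that is small compared with $n$. Take $j=\lfloor\log_2 n-\log_2\log_2 n\rfloor-t$ with $t\to\infty$ and $t=o(\log n)$. Then the direct sum of $\lceil n/c_j\rceil$ copies of $A_j$ has at most $2n/(j-1)+2^j=\frac{2n}{\log_2 n}(1+o(1))$ rows.
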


\begin{remark}
Lindstr\"om's construction is very concrete. In \cite{Lindstrom1965} he shows, using a M\"obius-function argument over the lattice of subsets, that for every $k\ge 1$ there is a detecting family of $2^k-1$ subsets of a set of size $k\,2^{k-1}$, and more generally that for every $m\ge1$ there is a detecting family of $m$ subsets of a set of size $\sum_{j=1}^{m}\nu(j)$, where $\nu(j)$ denotes the number of ones in the binary expansion of $j$. Since $\sum_{j\le m}\nu(j)=\tfrac12 m\log_2 m\,(1+o(1))$ and consecutive values of this sum differ by at most $\log_2 m+1$, monotonicity of $\M$ (a detecting family for $X$ restricts to one for any $X'\subseteq X$) yields $\M(n)\le \frac{2n}{\log_2 n}(1+o(1))$. A polynomial-time construction with the same asymptotics is given by Bshouty \cite{Bshouty2009}, who also surveys the history of the problem.
\end{remark}

\subsection{Entropy of integer-valued random variables}

We use binary entropy $\Ent$ throughout. The following classical bound (see e.g.\ \cite[Ch.~8]{CoverThomas2006}) will replace the crude count ``$|a\cap r|$ takes at most $n+1$ values'' used in \cite{LladserParadise2024}.

\begin{lemma}\label{lem:entropy}
Let $T$ be an integer-valued random variable with finite variance $\sigma^2$. Then
\[
\Ent(T)\le \frac12\log_2\!\Big(2\pi e\big(\sigma^2+\tfrac1{12}\big)\Big).
\]
\end{lemma}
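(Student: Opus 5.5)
The plan is to compare $T$ with a continuous random variable of almost the same variance and then use the fact that the Gaussian maximizes differential entropy under a variance constraint. Let $U$ be uniform on $[0,1)$, independent of $T$, and set $Y=T+U$. Then $Y$ has a density $f$ on $\R$ that is piecewise constant: on each interval $[k,k+1)$ it equals $p_k=\Pr(T=k)$. The differential entropy of $Y$ is therefore
\[
h(Y)=-\int f\log_2 f=-\sum_k p_k\log_2 p_k=\Ent(T).
\]
This identity is exact and is the key reduction. Next, by independence,
\[
\Var(Y)=\Var(T)+\Var(U)=\sigma^2+\tfrac1{12}.
\]

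The second step invokes the maximum-entropy property of the Gaussian (Cover--Thomas, Ch.~8). Among all densities with variance $v$, differential entropy is at most $\tfrac12\log_2(2\pi e v)$. I would derive this from nonnegativity of the relative entropy $D(f\|\phi)$, where $\phi$ is the normal density with the same mean and variance $v$. Expanding $D(f\|\phi)$ gives
\[
0\le -h(Y)-\int f\log_2\phi,
\]
and $-\int f\log_2\phi=\tfrac12\log_2(2\pi e v)$, because $\log\phi$ is quadratic and only the first two moments of $f$ enter. Applying this with $v=\sigma^2+\tfrac1{12}$ gives
\[
\Ent(T)=h(Y)\le\tfrac12\log_2\!\big(2\pi e(\sigma^2+\tfrac1{12})\big).
\]

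The only point that needs care is the entropy identity when $T$ has infinite support. Here $h(Y)$ should be read as the series $-\sum_k p_k\log_2 p_k$, which may be $+\infty$ a priori. The relative entropy inequality handles this without trouble: $\int f\log_2(f/\phi)\ge 0$ holds for any density, and the cross term is finite because the variance is finite. This rearrangement then shows that $\Ent(T)$ is finite and satisfies the bound. I expect no real obstacle beyond this integrability bookkeeping.
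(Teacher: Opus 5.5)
Your proposal is correct and follows the same route as the paper. You add an independent uniform variable of variance $\tfrac1{12}$ (on $[0,1)$ rather than $[-\tfrac12,\tfrac12]$, which changes nothing), use the exact identity $h(Y)=\Ent(T)$, and apply the Gaussian maximum-entropy bound; your derivation of that bound from $D(f\|\phi)\ge 0$, and your check that the argument survives infinite support, spell out what the paper simply cites from Cover--Thomas.
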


\begin{proof}
Let $U$ be uniform on $[-\tfrac12,\tfrac12]$, independent of $T$, and put $Y=T+U$. Since $T$ is integer-valued, $T$ is a function of $Y$ (namely the nearest integer), and $Y$ has a density that is constant, equal to $\Pr[T=j]$, on each interval $(j-\tfrac12,j+\tfrac12)$. Hence the differential entropy of $Y$ is
\[
h(Y)=-\sum_j \Pr[T=j]\log_2 \Pr[T=j]=\Ent(T).
\]
On the other hand $\Var(Y)=\sigma^2+\tfrac1{12}$, and among all real random variables of a given variance the Gaussian maximizes differential entropy \cite[Thm.~8.6.5]{CoverThomas2006}, so $h(Y)\le\frac12\log_2\big(2\pi e(\sigma^2+\frac1{12})\big)$.
\end{proof}

\section{The lower bound}

\begin{theorem}\label{thm:lower}
Let $n\ge2$ and $m=\lfloor n/2\rfloor$. Any family $D\subseteq 2^X$ that is detecting for the slice $\binom{X}{m}$ satisfies
\[
|D|\ \ge\ \frac{2\,\big(n-\log_2(n+1)\big)}{\log_2\!\big(\tfrac{\pi e}{4}\,n+\tfrac{\pi e}{6}\big)}=\frac{2n}{\log_2 n}\,\big(1+O(\tfrac{1}{\log n})\big).
\]
Consequently, by Lemma~\ref{lem:slice}, the same lower bound holds for every resolving set of $(2^X,\Jac)$.
\end{theorem}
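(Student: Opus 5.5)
Let $a$ be uniformly random in $\binom{X}{m}$ with $m=\lfloor n/2\rfloor$, and let $D=\{r_1,\dots,r_k\}$ be detecting for this slice. Write $T_i=|a\cap r_i|$. The proof is the Erd\H{o}s--R\'enyi entropy argument, in three steps.

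\textbf{Step 1 (entropy identity).} Since $a\mapsto(T_1,\dots,T_k)$ is injective on the slice, subadditivity of entropy gives
\[
\log_2\binom{n}{m}=\Ent(a)=\Ent(T_1,\dots,T_k)\le\sum_{i=1}^k\Ent(T_i).
\]

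\textbf{Step 2 (variance of each $T_i$).} Put $s=|r_i|$. Then $T_i$ is hypergeometric, and
\[
\Var(T_i)=s\,\frac{m}{n}\,\frac{n-m}{n}\,\frac{n-s}{n-1}.
\]
Using $\frac{m(n-m)}{n^2}\le\frac14$ and $s(n-s)\le n^2/4$, we get $\Var(T_i)\le \frac{n^2}{16(n-1)}$. The target constant $\frac{\pi e}{4}n+\frac{\pi e}{6}$ corresponds to $2\pi e(\sigma^2+\tfrac1{12})$ with $\sigma^2\le n/8$, because
\[
2\pi e\Big(\frac n8+\frac1{12}\Big)=\frac{\pi e}{4}n+\frac{\pi e}{6}.
\]
So the real task is to show $\Var(T_i)\le n/8$. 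Sharper bookkeeping is needed here. Maximize $s(n-s)$ over integers $s$, giving $\lfloor n/2\rfloor\lceil n/2\rceil$, and combine this with $m(n-m)=\lfloor n/2\rfloor\lceil n/2\rceil$, giving
\[
\Var(T_i)\le \frac{(\lfloor n/2\rfloor\lceil n/2\rceil)^2}{n^2(n-1)}.
\]
For even $n$ this equals $\frac{n^2}{16(n-1)}$, which slightly exceeds $n/8\cdot\frac12$... I will check the inequality $\frac{n^2}{16(n-1)}\le\frac n8$, i.e.\ $n\le 2(n-1)$. This holds for $n\ge2$. For odd $n$ the quantity is smaller still, so it is fine. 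This is the one step where care is needed, but it appears routine.

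\textbf{Step 3 (bound each term).} Lemma~\ref{lem:entropy} then gives, for every $i$,
\[
\Ent(T_i)\le\tfrac12\log_2\!\big(\tfrac{\pi e}{4}n+\tfrac{\pi e}{6}\big).
\]
Combining with Step 1,
\[
k\ge\frac{2\log_2\binom nm}{\log_2(\frac{\pi e}{4}n+\frac{\pi e}{6})}.
\]

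\textbf{Conclusion.} It remains to bound $\binom{n}{\lfloor n/2\rfloor}\ge 2^n/(n+1)$. This follows because the central binomial coefficient is the largest of the $n+1$ terms summing to $2^n$. Hence $\log_2\binom nm\ge n-\log_2(n+1)$, which is the displayed bound. The asymptotic form $\frac{2n}{\log_2 n}(1+O(1/\log n))$ follows by expanding numerator and denominator.

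For the consequence, let $R$ be a resolving set of $(2^X,\Jac)$. By Lemma~\ref{lem:slice}, equal Jaccard distances on the slice are equivalent to equal intersection sizes. Injectivity of the distance map on $\binom Xm$ therefore means $R$ is detecting for the slice, so $|R|$ obeys the same bound.
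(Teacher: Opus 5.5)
Your proposal is correct and follows the paper's proof essentially step for step. Both use the entropy chain on a uniform element of the middle slice, the hypergeometric variance bound $\Var(T_i)\le n/8$ fed into Lemma~\ref{lem:entropy}, the estimate $\binom{n}{\lfloor n/2\rfloor}\ge 2^n/(n+1)$, and Lemma~\ref{lem:slice} for the resolving-set consequence. The ``sharper bookkeeping'' detour in Step~2 is unnecessary: your first bound $\frac{n^2}{16(n-1)}\le\frac n8$, valid for $n\ge2$, is exactly what the paper uses.
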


\begin{proof}
Let $D=\{r_1,\dots,r_k\}$ be detecting for $\binom{X}{m}$ and let $A$ be a uniformly random element of $\binom{X}{m}$. Since $A$ is determined by the vector $(|A\cap r_1|,\dots,|A\cap r_k|)$, the chain rule and subadditivity of entropy give
\begin{equation}\label{eq:chain}
\log_2\binom{n}{m}=\Ent(A)\le \Ent\big(|A\cap r_1|,\dots,|A\cap r_k|\big)\le\sum_{i=1}^k \Ent(|A\cap r_i|).
\end{equation}
Fix $i$ and write $s=|r_i|$. The random variable $T_i=|A\cap r_i|$ is hypergeometric: it counts the number of elements of $r_i$ in a uniformly random $m$-subset of $X$. Its variance is
\[
\Var(T_i)=m\cdot\frac{s}{n}\Big(1-\frac{s}{n}\Big)\cdot\frac{n-m}{n-1}\ \le\ \frac{1}{4}\cdot\frac{m(n-m)}{n-1}\ \le\ \frac14\cdot\frac{n^2/4}{n-1}\ \le\ \frac{n}{8},
\]
using $s(n-s)/n^2\le\frac14$, $m(n-m)\le n^2/4$ and $n/(n-1)\le 2$. By Lemma~\ref{lem:entropy},
\[
\Ent(T_i)\le\frac12\log_2\!\Big(2\pi e\Big(\frac n8+\frac1{12}\Big)\Big)=\frac12\log_2\!\Big(\frac{\pi e}{4}\,n+\frac{\pi e}{6}\Big).
\]
Finally $\binom{n}{\lfloor n/2\rfloor}\ge 2^n/(n+1)$, being the largest of the $n+1$ binomial coefficients that sum to $2^n$, so $\log_2\binom nm\ge n-\log_2(n+1)$. Substituting into \eqref{eq:chain} and solving for $k$ gives the stated inequality.

For the last assertion: if $R$ resolves $(2^X,\Jac)$ then in particular it separates every pair $a\ne b$ in $\binom{X}{m}$, and by Lemma~\ref{lem:slice} this means $R$ is detecting for $\binom{X}{m}$.
\end{proof}

\begin{remark}
Theorem~\ref{thm:lower} improves the lower bound of \cite[Prop.~1.2]{LladserParadise2024} by a factor of $2$. The source of the improvement is precisely the one identified by Erd\H{o}s and R\'enyi \cite{ErdosRenyi1963}: the coordinate $|A\cap r|$ nominally ranges over $n+1$ values, but for a random $A$ it is concentrated in a window of width $O(\sqrt n)$, so it carries only $\frac12\log_2 n+O(1)$ bits rather than $\log_2 n$ bits. The paper \cite{LladserParadise2024} uses the bound $\binom{n}{m}\le (m+1)^{|R|}$, which charges $\log_2(m+1)\approx\log_2 n$ bits per landmark.
\end{remark}

\begin{remark}
The proof uses only that the landmarks separate the middle slice. Hence the lower bound applies to the a priori weaker task of resolving only pairs of subsets of equal size; in particular the ``cardinality information'' carried by the Jaccard distance (see \eqref{eq:recover}) cannot reduce the number of landmarks below $\frac{2n}{\log_2 n}(1+o(1))$.
\end{remark}

\section{The upper bound}

\begin{theorem}\label{thm:upper}
Let $D=\{r_1,\dots,r_k\}$ be a detecting family for $2^X$. Then $R=D\cup\{X\}$ resolves $(2^X,\Jac)$. Consequently
\[
\dimm(2^X,\Jac)\le \M(n)+1=\frac{2n}{\log_2 n}\,(1+o(1)).
\]
\end{theorem}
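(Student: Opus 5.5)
The plan is to show that the distance vector $\big(\Jac(a,r)\big)_{r\in R}$ determines $a$ in two stages. First it determines the cardinality $|a|$, using the landmark $X$. Then, with $|a|$ known, Lemma~\ref{lem:slice} recovers every intersection size $|a\cap r_i|$, and the detecting property of $D$ pins down $a$.

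\emph{Step 1: recovering $|a|$.} For the landmark $X$ we have $a\cup X=X$ and $a\triangle X=X\setminus a$. Hence $\Jac(a,X)=(n-|a|)/n$ for every $a\subseteq X$, including $a=X$, where both sides vanish. The map $a\mapsto \Jac(a,X)$ is therefore an injective function of $|a|$, and
\[
|a|=n\big(1-\Jac(a,X)\big).
\]
Consequently, if $a,b$ have the same distance vector to $R$, then $|a|=|b|=:m$.

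\emph{Step 2: reduction to a slice.} Suppose now that $a,b\in\binom{X}{m}$ and $\Jac(a,r_i)=\Jac(b,r_i)$ for every $i$. The equivalence in Lemma~\ref{lem:slice} gives $|a\cap r_i|=|b\cap r_i|$ for all $i$, and this needs no separate treatment of the degenerate case $a\cup r_i=\emptyset$, since the lemma already covers it. Because $D$ is detecting for $2^X$, and in particular injective on $\binom{X}{m}$, it follows that $a=b$. So $R$ resolves $(2^X,\Jac)$ and $\dimm(2^X,\Jac)\le|D|+1$.

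\emph{Step 3: the numerical bound.} Take $D$ to be a detecting family of minimum size $\M(n)$. Then $\dimm(2^X,\Jac)\le \M(n)+1$, and Theorem~\ref{thm:coin} (via the Lindstr\"om or Cantor--Mills constructions) gives $\M(n)=\frac{2n}{\log_2 n}(1+o(1))$. The additive $+1$ is absorbed into the $o(1)$ term.

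I expect no real obstacle here. The only point needing care is Step 1: I must check that $\Jac(\cdot,X)$ separates all cardinalities, including the edge cases $a=\emptyset$ (distance $1$) and $a=X$ (distance $0$), which is immediate from the formula.
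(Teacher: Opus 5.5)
Your proof is correct and matches the paper's argument: the landmark $X$ gives $|a|=n(1-\Jac(a,X))$, and then Lemma~\ref{lem:slice} together with the detecting property of $D$ determines $a$. The only difference is cosmetic. You apply the equivalence half of Lemma~\ref{lem:slice} to a pair $a,b$ of equal size, whereas the paper recovers $|a\cap r_i|$ directly via \eqref{eq:recover}. Your route absorbs the degenerate case $a\cup r_i=\emptyset$ into the lemma, while the paper handles it separately.
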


\begin{proof}
Let $a\subseteq X$. First, $\Jac(a,X)=|X\setminus a|/|X|=1-|a|/n$, so the distance to the landmark $X$ determines $m=|a|$. Next, for each $i$, Lemma~\ref{lem:slice} (equation \eqref{eq:recover}, with $m$ and $|r_i|$ now known) recovers $|a\cap r_i|$ from $\Jac(a,r_i)$; the degenerate case $a\cup r_i=\emptyset$ can only occur when $m=0$, in which case $a=\emptyset$ is already determined. Thus the vector $(\Jac(a,r))_{r\in R}$ determines $(|a\cap r_1|,\dots,|a\cap r_k|)$, which determines $a$ because $D$ is detecting. Hence $R$ resolves $2^X$, and $|R|\le k+1$. Taking $D$ of minimum size gives $\dimm(2^X,\Jac)\le\M(n)+1$, and Theorem~\ref{thm:coin} gives the asymptotics.
\end{proof}

\begin{remark}
Any landmark from which $|a|$ can be read off would do in place of $X$; e.g.\ \cite{LladserParadise2024} use the triple $\emptyset,\{x\},X\setminus\{x\}$ for the same purpose. The landmark $X$ has the small advantage of being a single set. If the detecting family already contains $X$ (one may always add it, since $|a\cap X|=|a|$), then $R=D$ works and the ``$+1$'' disappears.
\end{remark}

\begin{remark}[Why the random construction loses a constant]
The upper bound of \cite{LladserParadise2024} takes the $r_i$ to be independent uniformly random subsets of $X$ and uses a union bound over pairs $\{a,b\}$ with $|a|=|b|$: for a fixed such pair, $|a\cap r|=|b\cap r|$ with probability $\Theta(|a\triangle b|^{-1/2})$, and requiring $\sum_{a,b}\Pr[\text{all $k$ landmarks fail}]\to0$ leads to $k\approx 2\log_2(2e)\,n/\log_2 n$. This is the same phenomenon that Erd\H{o}s and R\'enyi observed for random detecting matrices: the union bound over $\binom{2^n}{2}$ pairs costs a constant factor compared with the algebraic constructions of Lindstr\"om and of Cantor--Mills. The constant $2\ln(2e)=2+2\ln 2$ versus the truth $2\ln 2$ (in units of $n/\ln n$) reflects exactly this loss.
\end{remark}

\section{Proof of Theorem~\ref{thm:main} and small cases}

Theorem~\ref{thm:main} is the conjunction of Theorem~\ref{thm:lower} (the explicit bound \eqref{eq:explicit-lower} and its asymptotic form) and Theorem~\ref{thm:upper} together with Theorem~\ref{thm:coin}. \qed

For small $n$ the values of $\dimm(2^X,\Jac)$ and $\M(n)$ can be computed by exhaustive search; Table~\ref{tab:small} lists them for $n\le 5$.

\begin{table}[h]
\centering
\begin{tabular}{c|ccccc}
\toprule
$n$ & 1 & 2 & 3 & 4 & 5\\
\midrule
$\dimm(2^X,\Jac)$ & 1 & 2 & 2 & 3 & 3\\
$\M(n)$ & 1 & 2 & 3 & 3 & 4\\
\bottomrule
\end{tabular}
\medskip
\caption{Metric dimension of Jaccard space versus the coin-weighing number for $n\le5$ (exhaustive search).}
\label{tab:small}
\end{table}

For instance, for $n=3$ the two landmarks $\{1,2\}$ and $\{1,3\}$ resolve $2^{\{1,2,3\}}$, although $\M(3)=3$.
This is possible because a single Jaccard distance carries more than one intersection count---it
carries the ratio in \eqref{eq:recover}, hence information about $|a|$ as well---so that $\dimm(2^X,\Jac)$ and
$\M(n)$ need not coincide, and $\dimm(2^X,\Jac)<\M(n)$ does occur. The content of Theorem~\ref{thm:main} is that this extra information is worth only a lower-order number of landmarks.

\section{Concluding remarks}

\begin{question}
What is the second-order term? Theorem~\ref{thm:main} gives
$\dimm(2^X,\Jac)=\frac{2n}{\log_2 n}\big(1+O(\frac{1}{\log n})\big)$ from below and $\dimm(2^X,\Jac)\le \M(n)+1$ from above. Is $\dimm(2^X,\Jac)\le \M(n)$ for all $n$? Is $\M(n)-\dimm(2^X,\Jac)$ bounded, or does it grow?
\end{question}

\begin{question}
Lladser and Paradise \cite[Thm.~1.2, Cor.~1.2]{LladserParadise2024} also show that $O(\sqrt n)$ random landmarks suffice, with high probability, to separate all pairs of subsets of \emph{different} sizes, and that all pairs of subsets of size at most $(1-\varepsilon)(\ln\pi)\sqrt n/\ln n$ can be separated by such a family. The slice-by-slice viewpoint of Lemma~\ref{lem:slice} suggests studying, for each $m$, the least size $\M(n,m)$ of a detecting family for the slice $\binom{X}{m}$; Theorem~\ref{thm:lower} shows $\M(n,\lfloor n/2\rfloor)=\frac{2n}{\log_2 n}(1+o(1))$, and it would be interesting to determine $\M(n,m)$ for $m=o(n)$, where the entropy bound gives $\M(n,m)\gtrsim \frac{2m\log_2(n/m)}{\log_2 m}$.
\end{question}

\section*{Acknowledgments}
The paper was produced with the assistance of Claude Fable 5.1 (Anthropic).
This work was supported by a grant from the Simons Foundation (\#4508914 to Bj\o rn Kjos-Hanssen).
The results were verified in Lean 4 using Aristotle (Harmonic), see \cite{jmd}, except that
Theorem \ref{thm:coin} was taken as a black box and not proved.

\end{document}